\newtheorem{thm}{Theorem}[section]
\newtheorem{lemma}{Lemma}[section]
\newtheorem{prop}[lemma]{Proposition}
\newtheorem{cor}[lemma]{Corollary}
\theoremstyle{remark}
\newtheorem{remark}[thm]{Remark}
\numberwithin{equation}{section}
\begin{document}
\title[$N$-body time-symmetric initial data sets in general relativity]{Construction of $N$-body time-symmetric initial data sets in general relativity}
\author{Piotr T. Chru\'{s}ciel} 
\address{Hertford College and Oxford Centre for Nonlinear PDE, University of Oxford}
\email{chrusciel@maths.ox.ac.uk}
\author{Justin Corvino} 
\address{Department of Mathematics, Lafayette College}
\email{corvinoj@lafayette.edu}
\author{James Isenberg}
\address{Department of Mathematics, University of Oregon}
\email{isenberg@uoregon.edu}
\thanks{The authors are grateful to Institut Mittag-Leffler (Djursholm, Sweden), for hospitality and financial support during the initial work on this paper.  JC acknowledges partial support by both the Fulbright Foundation, and from NSF grant DMS-0707317.  JI is partially supported by NSF grant PHY-0652903.  JC and JI thank Daniel Pollack for several useful conversations.}

\subjclass[2000]{Primary 53C21, 83C99}

\begin{abstract} Given a collection of $N$ asymptotically Euclidean ends with zero scalar curvature, we construct a Riemannian manifold with zero scalar curvature and one asymptotically Euclidean end, whose boundary has a neighborhood isometric to the disjoint union of a specified collection of sub-regions of the given ends.   An application is the construction of time-symmetric solutions of the constraint equations which model $N$-body initial data.  

\end{abstract}

\maketitle

\section{Introduction.}  

The work of Corvino and Schoen \cite{cor:schw, cs:ak} shows that one can glue an interior region of any asymptotically Euclidean (AE) solution of the vacuum constraints to an exterior Kerr or Schwarzschild solution in the following sense: for any chosen AE end in any given AE solution of the vacuum Einstein constraint equations, and for any chosen interior region (bounded away from infinity) in this solution, there is a new AE solution of the vacuum constraints which has an interior region that is isometric to the original interior region, and so that the chosen AE end has been replaced with an AE end in a space-like slice of a Kerr space-time.  If the original solution of the constraints is time-symmetric, then the exterior is isometric to an AE end in a time-symmetric slice of a Schwarzschild space-time.  This work shows, remarkably, that one can geometrically shield the details of the gravitational field in the interior region, so that sufficiently far away, all that one sees is the effective total energy-momentum and total angular momentum of the interior.

A natural question to pose is whether one can produce solutions of the constraints which contain exact copies of the interior regions of two or more solutions. Using the gluing techniques of \cite{cd:as, cd,imp, IMP2}, one can indeed do this \cite{cip}; however the solutions resulting from these gluing methods contain two or more asymptotic ends. There are gluing results \cite{cd} which do incorporate multiple interiors and can produce solutions with a single asymptotic end, but all such results to date require that the multiple regions be copies of each other, and be placed in certain symmetric configurations.

We show here that in fact one can glue multiple interior regions of solutions of the vacuum constraint equations into a solution with a single asymptotic (Schwarzschild) end, without imposing any matching conditions on the solutions being glued, and without imposing any symmetry restrictions on the placement of the various interior regions. Our main restriction here is that we consider only time-symmetric solutions of the constraint equations (and therefore Schwarzschild rather than Kerr exteriors), and in the solutions constructed, the given interior regions must be sufficiently far from each other. In future work, we lift the time-symmetric restriction.  It is not likely that we can weaken the condition that the regions be sufficiently distant from each other, or get a more effective bound on the scale of the final configuration. 

The motivation for the present work is to set up initial data for the general relativistic (Einsteinian) version of the gravitational $N$-body problem. We recall that one of the signature results of eighteenth century physics is the complete and explicit solution of the Newtonian two-body problem for point particles. There is no such complete solution for more than two bodies, or for bodies with finite extent and interior dynamics, like stars. However, for any number of bodies, and for most standard equations of state, choosing $N$-body initial data for Newtonian gravity involves solving a single (linear) Poisson equation. While the nonlinearity and intricacy of the Einstein initial value constraint equations \cite {bi:ce} has long suggested that choosing $N$-body initial data for Einsteinian gravity would be much harder, this work (and its followup) indicate that in fact one can set up fairly general $N$-body initial data in general relativity.

As noted above, in this paper our focus is on time-symmetric solutions of the vacuum constraints, in which case the gravitational field is described completely by a Riemannian metric $g$, which satisfies the single equation $R(g)=0$, where $R(g)$ is the scalar curvature of $g$. The aim is to start with a collection of $N$ asymptotically Euclidean time-symmetric solutions, and choose a fixed interior region in each, thereby specifying each of the ``bodies." Then after choosing a collection of points which roughly locate the $N$ bodies on a fiducial flat background, one shows that there is a solution of the constraints which: i) contains $N$ regions which are isometric to the chosen bodies; ii) is a solution of the time-symmetric constraint $R(g)=0$ everywhere; iii) is identical to a slice of Schwarzschild sufficiently far from the bodies; and iv) has the centers of the bodies in a configuration which is a
scaled version of the chosen configuration.  Here the scale factor can be chosen arbitrarily above a certain threshold. For example, if $N=2$, the distance between the bodies can be arbitrarily chosen above a certain threshold value (depending on the ratio of the masses of the bodies).

To carry out such a construction, we start by recalling that a Riemannian metric $g$ on the exterior of a ball in $\mathbb{R}^n$ is an \emph{asymptotically Euclidean end} (to order $\ell$) provided there are coordinates in which, for multi-indices $|\beta|\leq \ell$, 
\begin{eqnarray} \label{af}
|\partial_x^{\beta}(g_{ij}-\delta_{ij})(x)|&=&O\big(|x|^{-|\beta|-(n-2)}\big)\; ,
\end{eqnarray}
where $\partial$ is a partial derivative operator.  We require $\ell \geq 3$ in what follows, and we take $g$ to be smooth, at least $C^{4, \alpha}$ for some $\alpha \in (0,1)$.  A complete Riemannian manifold $(M,g)$ is \emph{asymptotically Euclidean} if there is a compact set $K \subset M$ so that $M\setminus K$ is the disjoint union of (finitely many) AE ends.  

We now state the main theorem.  

\begin{thm}  For each $k=1, \ldots, N$, let $(E_k, g^k)$ be an $n$-dimensional ($n\geq 3$ AE end with zero scalar curvature and ADM mass $m_k>0$, and let $U_k\subset E_k$ be a pre-compact neighborhood of the boundary $\partial E_k$.  Let $M_{\rm{ext}}= \mathbb{R}^n\setminus \bigcup\limits_{k=1}^N B_k$, where $\overline{B}_1, \ldots, \overline{B}_N$ are pairwise disjoint closed balls in $\mathbb{R}^n$.  Then for each $\epsilon>0$, there is a Riemannian metric $g_{\epsilon}$ on $M_{\rm{ext}}$ of zero scalar curvature, so that $(M_{\rm{ext}},g_{\epsilon})$ contains a neighborhood of the boundary $\partial M_{\rm{ext}}$ which is isometric to the disjoint union $\bigcup\limits_{k=1}^N (U_k, g^k)$, and $(M_{\rm{ext}}, g_{\epsilon})$ has one AE end which is identical to a Schwarzschild exterior with mass $m(g_{\epsilon})$ satisfying $\Big| m(g_{\epsilon})-\sum\limits_{i=1}^k m_k\Big| < \epsilon$. \label{main} \end{thm}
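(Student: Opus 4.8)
The plan is to combine the Corvino--Schoen localized gluing machinery \cite{cor:schw, cs:ak} with an explicit conformally flat ``multi--Schwarzschild'' model for the region between the bodies. \emph{First, prepare each body:} apply the Corvino--Schoen gluing construction to each end $(E_k,g^k)$, preserving the region $U_k$, to obtain a metric $\hat g^k$ on $E_k$ with $R(\hat g^k)=0$, with $\hat g^k=g^k$ on $U_k$, and with $\hat g^k$ equal to the spatial Schwarzschild metric of some mass $\mu_k$ outside a compact set $K_k\supset\overline{U_k}$, where $|\mu_k-m_k|$ can be made as small as we wish by gluing sufficiently far out in $E_k$. In isotropic coordinates centered at the chosen center $c_k$ of $B_k$, this Schwarzschild collar is $\psi_k^{4/(n-2)}\delta$ with $\psi_k=1+\tfrac{\mu_k}{2}|x-c_k|^{-(n-2)}$ on an annulus about $c_k$.

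\emph{Next, build the conformally flat bulk:} rescale the configuration $\{B_k\}$ by a large factor $\lambda$ --- this is where the hypothesis that the bodies be sufficiently distant enters --- and on $\mathbb{R}^n\setminus\{\lambda c_1,\dots,\lambda c_N\}$ set $u_\lambda=1+\sum_k a_k\,|x-\lambda c_k|^{-(n-2)}$, which is harmonic, so $u_\lambda^{4/(n-2)}\delta$ has zero scalar curvature with one end near infinity and one near each $\lambda c_k$. Near $\lambda c_k$ one has $u_\lambda=c_k^\ast+a_k|x-\lambda c_k|^{-(n-2)}+(\text{lower order})$ with $c_k^\ast:=1+\sum_{j\neq k}a_j|\lambda c_j-\lambda c_k|^{-(n-2)}$, and absorbing the constant $c_k^\ast$ into a dilation of coordinates turns the leading part into the Schwarzschild metric of mass $2a_k c_k^\ast$. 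Solving the fixed-point system $a_k=\mu_k/(2c_k^\ast)$ --- a small perturbation of $a_k=\mu_k/2$, uniquely solvable for $\lambda$ large --- matches the $k$-th bulk end, to leading order, with the collar from the first step.

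\emph{Then glue, correct, and close off at infinity:} form an approximate metric $\bar g_\lambda$ on $M_{\rm{ext}}$ (with the rescaled balls) equal to $\hat g^k$ near each $B_k$, equal to $u_\lambda^{4/(n-2)}\delta$ away from all the $B_k$, and interpolating on thin annuli $A_k$ around each $\partial B_k$, where by the previous step the two metrics are $C^\ell$-close to a common Schwarzschild metric; then $R(\bar g_\lambda)$ is supported in $\bigcup_k A_k$ and small there, with smallness controlled by $\lambda^{-1}$ and the thickness of the $A_k$. Apply the Corvino--Schoen localized scalar-curvature deformation to produce a symmetric $2$-tensor supported in slightly larger annuli $A_k'$ --- mutually disjoint, and disjoint from $\bigcup_k U_k$ and from infinity --- with $R\bigl(\bar g_\lambda+(\text{correction})\bigr)=0$; call the result $g_\epsilon$. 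It still equals $g^k$ on each $U_k$, and near infinity it coincides with $u_\lambda^{4/(n-2)}\delta$, which is only asymptotically Schwarzschild, so one final application of Corvino--Schoen exterior gluing in a large annulus replaces the exterior with an exact Schwarzschild metric of mass $m(g_\epsilon)$. Since $2\sum_k a_k=\sum_k\mu_k+O(\lambda^{-(n-2)})$ and $|\mu_k-m_k|$ and the mass shift in the last step are controllably small, one gets $\bigl|m(g_\epsilon)-\sum_{k=1}^{N}m_k\bigr|<\epsilon$.

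The hard part is the localized correction, because of static Killing initial data. An annular region that is $C^\ell$-close to Schwarzschild carries a nontrivial static potential spanning the cokernel of the linearized scalar curvature operator there; this obstructs solving $R=0$ outright and is precisely why the total mass can be prescribed only up to $\epsilon$. One must project the error off this finite-dimensional cokernel near each junction, absorb the projection by readjusting the masses $\mu_k$ (re-running the first step, which remains compatible with preserving $U_k$) together with the $a_k$ and, if necessary, the positions $c_k$, and then close the whole construction by a contraction-mapping argument in parameter space. Making the requisite weighted elliptic estimates uniform as $\lambda\to\infty$ --- with constants independent of the growing separations --- is the technical core.
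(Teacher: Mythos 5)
Your proposal follows a genuinely different route than the paper's, though the overall architecture (prepare each body, build a bulk model, glue and correct, handle the cokernel obstruction, pay with the mass error) is the same. The paper does \emph{not} use a Brill--Lindquist multi-pole conformal factor; instead it takes a single scalar-flat metric $\bar\gamma_\epsilon$ from Proposition~\ref{prop:as} which is \emph{exactly} Schwarzschild outside a ball, translates and dilates it by parameters $(\mu,c)$, and patches in small-mass Schwarzschild metrics of mass $\epsilon m_k$ around each center $c_k$. This has two structural payoffs you give up: (i) the outer boundary is already exactly Schwarzschild, so there is no separate exterior gluing step; (ii) the scalar-curvature correction is performed \emph{once}, globally on the whole region $\Omega=B_0\setminus\bigcup\{|x-c_k|\le 2\}$ via Proposition~\ref{prop:proj}, so the cokernel is the single $4$-dimensional space $K=\mathrm{span}\{1,x^1,x^2,x^3\}$, balanced by the $4$ parameters $(\mu,c)$. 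The paper kills the four integral obstructions by explicit flux computations across the $N+1$ boundary spheres, and the crucial device is the balance condition $\sum_k m_k c_k=0$: it makes the inner-sphere contributions to $\int_\Omega x^\ell R\,dx$ cancel, leaving only the outer-sphere term $16\pi\epsilon\mu m_T c^\ell+O(\epsilon^2)$ (and $16\pi\epsilon(\mu-1)m_T+O(\epsilon^2)$ for $\ell=0$), so an IVT/continuity argument in just $\mathbb{R}^4$ finishes the job. Your variant --- local Corvino--Schoen corrections supported in $N$ disjoint annuli, each contributing an (approximately $4$-dimensional) cokernel, balanced by adjusting $(\mu_k,c_k)$ and closing with a contraction in a $4N$-dimensional parameter space --- is plausible and should be implementable, but it is harder: you must show the $4N\times 4N$ Jacobian of the obstruction map is uniformly invertible as the separation goes to infinity, and your matching $a_k=\mu_k/(2c_k^\ast)$ only aligns the leading term of $u_\lambda$ with Schwarzschild near each pole, leaving dipole-and-higher tails from the other poles that enter the error budget. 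Both approaches scale the configuration (your $\lambda\to\infty$ plays the role of the paper's $\epsilon\to 0$ with the final pullback by $f_\epsilon$), and both correctly identify static KIDs as the obstruction. The paper's is simply the sharper organization of the same ingredients.
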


The theorem allows us to glue together any finite number of AE ends which solve the time-symmetric vacuum constraint equation (i.e., which have vanishing scalar curvature) into a solution with \emph{one} AE end.  Moreover, the construction is local near infinity in each end $E_k$, i.e., any given compact subset of $E_k$ can be realized isometrically in the final metric $(M,g_{\epsilon})$.  As a direct consequence, then, we can take a collection of $N$ Riemannian $n$-manifolds $(M_k, g^k)$ in each of which we have chosen an AE end $E_k$ of zero scalar curvature and positive ADM mass $m_k$, we let $M_{\text{ext}}= \mathbb{R}^n\setminus \bigcup\limits_{k=1}^N B_k$ as above, and we can produce a Riemannian manifold $(M, g_{\epsilon})$ with the following properties: $M$ is diffeomorphic to the quotient of the disjoint union $M_{\text{ext}} \bigcup\left(\bigcup\limits_{k=1}^N M_k\setminus E_k\right)$, where we identify $\partial E_k$ and $\partial B_k$; $g_{\epsilon}=g^k$ on $M_k\setminus E_k$; $(M_{\text{ext}}, g_{\epsilon})$ has vanishing scalar curvature and one AE end, with ADM mass $m(g_{\epsilon})$ satisfying $\Big| m(g_{\epsilon})-\sum\limits_{i=1}^k m_k\Big| < \epsilon$.  The result of this kind of construction may be interpreted as an $N$-body initial data set.  There is some flexibility from the fact that the construction only deforms the original solutions near infinity in the chosen ends.  For instance, we can allow the $N$ original solutions (bodies) to have multiple ends, and we can allow nonzero scalar curvature supported away from the chosen ends, which might be interpreted as energy density for matter or a field inside the bodies.   As one of the goals is to glue multiple AE ends into a single AE end, we state the following corollary, which follows directly from Theorem \ref{main}; note that if each of the $M_k$ below is diffeomorphic to $\mathbb{R}^n$, then from the above discussion, we can choose $M$ to be diffeomorphic to $\mathbb{R}^n$ as well. 

\begin{cor} Let $(M_k, g^k)$, $n\geq 3$, $k=1, \ldots, N$, be complete AE solutions of the time-symmetric constraint $R(g^k)=0$, each of which has one AE end (\ref{af}), with ADM mass $m_k>0$.  Let $U_k\subset M_k$ be chosen compactly contained sub-domains.  Then for each $\epsilon>0$, there is a complete Riemannian manifold $(M,g_{\epsilon})$ with $R(g_{\epsilon})=0$, which contains a region isometric to the disjoint union $\bigcup\limits_{k=1}^N (U_k, g^k)$, and with only one AE end which is identical to a Schwarzschild end with mass $m(g_{\epsilon})$ satisfying $\Big| m(g_{\epsilon})-\sum\limits_{i=1}^k m_k\Big| < \epsilon$.  \label{maincor} \end{cor}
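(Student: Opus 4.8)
The plan is to deduce the corollary from Theorem~\ref{main} together with the extension of it described in the paragraph immediately above, the only point requiring care being to arrange that the given domains $U_k$ end up inside the ``cores'' that are reattached. Since each $(M_k,g^k)$ is asymptotically Euclidean with a single end, I would write $M_k=L_k\cup E_k$ with $L_k$ compact and $E_k$ an AE end as in \eqref{af} (the exterior of a coordinate ball in $M_k$), $L_k\cap E_k=\partial E_k$. Because $U_k$ is precompact in $M_k$, enlarging the coordinate ball makes $\overline{U_k}$ lie in the interior of $L_k$, hence disjoint from $\overline{E_k}$; enlarging the ball does not change the ADM mass, so the end $(E_k,g^k)$ still has mass $m_k$. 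I would then fix a precompact neighborhood $W_k$ of $\partial E_k$ in $E_k$ and pairwise disjoint closed balls $B_1,\dots,B_N\subset\mathbb{R}^n$, and apply Theorem~\ref{main} to the ends $(E_k,g^k)$ with these neighborhoods $W_k$. This yields, for each $\epsilon>0$, a metric $g_\epsilon$ on $M_{\mathrm{ext}}=\mathbb{R}^n\setminus\bigcup_kB_k$ with $R(g_\epsilon)=0$, a neighborhood of $\partial M_{\mathrm{ext}}$ isometric to $\bigcup_k(W_k,g^k)$, and one AE end identical to a Schwarzschild exterior with mass $m(g_\epsilon)$, $|m(g_\epsilon)-\sum_k m_k|<\epsilon$.

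Next I would reattach the cores. Using the isometry furnished by the theorem to identify a collar of each component $\partial B_k\subset\partial M_{\mathrm{ext}}$ with a collar of $\partial E_k$ in $E_k\subset M_k$, let $M$ be the quotient of $M_{\mathrm{ext}}\sqcup\bigl(\bigsqcup_k (M_k\setminus E_k)\bigr)$ obtained by identifying $\partial B_k$ with $\partial E_k$ via that map, and set $g_\epsilon=g^k$ on $M_k\setminus E_k$. Across each seam the two prescriptions of $g_\epsilon$ agree: near $\partial B_k$ the metric $g_\epsilon$ is a copy of $g^k|_{W_k}$, while on $M_k\setminus E_k$ it is $g^k|_{L_k}$, and both are restrictions of the single smooth metric $g^k$ on $M_k$; so $g_\epsilon$ is a smooth metric on $M$. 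I would then read off the asserted properties: $R(g_\epsilon)=0$ on $M$ since this holds on each piece; $M$ contains $\bigsqcup_k(M_k\setminus E_k)$ isometrically, hence an isometric copy of $\bigcup_k(U_k,g^k)$ since $\overline{U_k}\subset L_k$; and $(M,g_\epsilon)$ is complete, because attaching the compact caps $M_k\setminus E_k$ to $M_{\mathrm{ext}}$ along the compact hypersurfaces $\partial B_k$ removes the boundary, leaving as the only noncompact region the single AE end of $M_{\mathrm{ext}}$, which is identical to a Schwarzschild exterior of mass $m(g_\epsilon)$ with $|m(g_\epsilon)-\sum_k m_k|<\epsilon$. (If each $M_k\cong\mathbb{R}^n$, then each $M_k\setminus E_k$ is a closed ball, and filling the holes $B_k$ with closed balls gives $M\cong\mathbb{R}^n$.)

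Thus the corollary involves no real obstacle beyond this bookkeeping; all the substance sits in Theorem~\ref{main}. For orientation I note where the difficulty there lies: after using the Corvino--Schoen construction of \cite{cor:schw} to replace each end $E_k$, away from $W_k$, by an exact Schwarzschild end of mass arbitrarily close to $m_k$, one patches these onto a Brill--Lindquist-type model $V^{4/(n-2)}\delta$ on $\mathbb{R}^n$ minus $N$ widely separated points ($V>0$ harmonic), arranged so that its total mass differs from $\sum_k m_k$ only by interaction terms, which decay in the separation; the hard part is then correcting the patched metric, by a deformation supported in the $N$ annular transition regions, to an exact solution of $R=0$, which as in \cite{cor:schw} requires absorbing the finite-dimensional cokernel of the linearized scalar-curvature operator (the static potentials of the nearly-Schwarzschild pieces) into the Brill--Lindquist mass and location parameters, uniformly over the $N$ coupled gluing regions.
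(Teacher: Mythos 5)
Your deduction of the corollary from Theorem~\ref{main} is correct and matches the paper's own route exactly: shrink the end $E_k$ so that $\overline{U_k}$ lies in $M_k\setminus E_k$, apply Theorem~\ref{main} to get a scalar-flat metric on $M_{\mathrm{ext}}$ with boundary collars isometric to neighborhoods of $\partial E_k$, and reattach the cores $M_k\setminus E_k$ along those collars; this is precisely the construction the paper describes in the paragraph preceding the corollary. (Your parenthetical orientation on the proof of Theorem~\ref{main} is a harmless side remark and slightly off in detail --- the paper patches onto a rescaled asymptotically simple metric $\bar\gamma_\epsilon$ from Proposition~\ref{prop:as}, not a Brill--Lindquist conformal model --- but this does not affect the corollary's deduction.)
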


\section{Preliminaries}

We begin with a proposition which follows from \cite{cd:as} or \cite{cor:as}.   The metrics introduced here give a family of metrics into which the ends will be glued to prove the main theorem. 

\begin{prop}  There is a family of metrics $\gamma_{t}$, $t\in (0,\delta)$, on $\mathbb{R}^n$ of zero scalar curvature which converge as $t\rightarrow 0^+$ to the Euclidean metric in $C^k$ (for any $k$), and are exactly Schwarzschild on $\{ x: |x|\geq 1\}$ of positive mass $m(\gamma_t)$ and centered at $x=0$, with $\lim\limits_{t\rightarrow 0^+} m(\gamma_t)=0$. \label{prop:as}
\end{prop}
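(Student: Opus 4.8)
The plan is to obtain $\gamma_t$ by a Corvino--Schoen scalar-curvature gluing, starting from an explicit conformally flat reference family whose scalar curvature is small and compactly supported. Fix a cutoff $\chi\in C^\infty(\mathbb{R}^n)$ with $\chi\equiv 0$ on $B_{1/2}$ and $\chi\equiv 1$ outside $B_1$, and for small $\mu>0$ put $u_\mu=1+\tfrac{\mu}{2}\chi(x)|x|^{2-n}$ and $g_\mu=u_\mu^{4/(n-2)}\delta$. Then $g_\mu$ is a smooth metric on $\mathbb{R}^n$ which equals $\delta$ on $B_{1/2}$ and equals the Schwarzschild metric of mass $\mu$ centered at $0$ on $\mathbb{R}^n\setminus B_1$; since $|x|^{2-n}$ is harmonic on $\mathbb{R}^n\setminus\{0\}$, the scalar curvature $R(g_\mu)=-\tfrac{4(n-1)}{n-2}u_\mu^{-(n+2)/(n-2)}\Delta u_\mu$ is supported in the annulus $A=\{1/2\le|x|\le 1\}$ and satisfies $\|R(g_\mu)\|_{C^k}=O(\mu)$ for every $k$; moreover $g_\mu\to\delta$ in $C^k$ as $\mu\to0^+$.

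Next I would apply the scalar-curvature gluing theorem of \cite{cor:as} (equivalently \cite{cd:as}): for $\mu$ sufficiently small there is a metric $\gamma_\mu$ on $\mathbb{R}^n$, agreeing with $g_\mu$ on $B_{1/2}$, with $R(\gamma_\mu)\equiv 0$, and coinciding on $\mathbb{R}^n\setminus B_1$ with the Schwarzschild metric of some mass $m_\mu>0$ centered at some point $c_\mu$, where $|m_\mu-\mu|+|c_\mu|=O(\|R(g_\mu)\|)=O(\mu)$ and $\|\gamma_\mu-g_\mu\|_{C^k}=O(\|R(g_\mu)\|)=O(\mu)$. Here the exterior mass $m_\mu$ and center $c_\mu$ furnish the $n+1$ parameters needed to compensate the $(n+1)$-dimensional cokernel of the linearized scalar-curvature operator on $A$, which at the flat metric is spanned by the static potentials $1,x_1,\dots,x_n$ and is the only obstruction to solving $R=0$ by a deformation supported in $A$. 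Combining the two $C^k$-bounds with the convergence of $g_\mu$ gives $\gamma_\mu\to\delta$ in $C^k$ for every $k$.

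Finally I would normalize the center to the origin and record positivity and smallness of the mass. Replacing $\gamma_\mu$ by its pullback under $x\mapsto x+c_\mu$ moves the Schwarzschild center to $0$; since $|c_\mu|=O(\mu)\to0$ and $\delta$ is translation-invariant this preserves the $C^k$-convergence, and, after taking the Schwarzschild region in the construction to be $\{|x|\ge 1-\eta\}$ for a small fixed $\eta$, the recentered metric is exactly Schwarzschild centered at $0$ on $\{|x|\ge 1\}$. That $m_\mu>0$ is part of the conclusion of the cited gluing results (the exterior mass being adjusted only within an $O(\mu)$-neighborhood of $\mu>0$); it also follows from the positive mass theorem applied to the complete asymptotically Euclidean manifold $(\mathbb{R}^n,\gamma_\mu)$ with $R(\gamma_\mu)\equiv 0$, once one notes that $\gamma_\mu$ is not the flat metric (its exterior Schwarzschild mass is nonzero). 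Since $m_\mu\to 0$ as $\mu\to0^+$, relabeling $t=\mu$, $\gamma_t:=\gamma_\mu$, $m(\gamma_t):=m_\mu$ for $t\in(0,\delta)$ yields the asserted family.

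The main obstacle is the gluing step itself: one cannot kill $R$ merely by perturbing $g_\mu$ inside the annulus, because of the finite-dimensional cokernel of the linearized scalar-curvature operator there, and handling this through an interplay between an interior deformation and the $(n+1)$-parameter family of exterior Schwarzschild metrics is exactly the content of \cite{cor:as,cd:as}; at the level of this proposition that step is a citation. The remaining points needing care are minor: the bookkeeping of the recentering translation, and confirming that the construction delivers a strictly positive (rather than merely nonnegative) mass.
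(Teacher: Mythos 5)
The paper does not prove Proposition \ref{prop:as}; it simply cites \cite{cd:as} and \cite{cor:as}, and your argument is a reasonable reconstruction of how the result follows from those references. The overall structure — start from an explicit conformally flat family $g_\mu=u_\mu^{4/(n-2)}\delta$ that is flat in a small ball, Schwarzschild of mass $\mu$ outside the unit ball, with $R(g_\mu)=O(\mu)$ supported in an annulus, and then invoke Corvino--Schoen gluing with the exterior $(m,c)$ as parameters to kill the cokernel — is correct in spirit and is exactly the sort of construction those references carry out.

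Two points need tightening. First, some bookkeeping: you cannot literally perturb $g_\mu$ in the annulus and end up with a metric that both agrees with $g_\mu$ on $B_{1/2}$ and equals a \emph{different} Schwarzschild metric (of mass $m_\mu$, center $c_\mu$) outside $B_1$. The object one actually perturbs is a two-parameter family of interpolating metrics $g_{\mu,m,c}$ (flat in $B_{1/2}$, Schwarzschild of mass $m$ centered at $c$ outside $B_1$, cut off in between), applying Proposition \ref{prop:proj} for each $(m,c)$ and then solving for $(m,c)$ so that the residual cokernel coefficients vanish; this is implicit in your prose but worth stating. Second, and more substantively, your justification that $m_\mu>0$ does not close. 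The crude estimate $|m_\mu-\mu|=O(\mu)$ does not by itself force $m_\mu>0$ (the implicit constant could exceed $1$), and your positive-mass-theorem fallback is circular as written — you assume the exterior Schwarzschild mass is nonzero in order to conclude the metric is nonflat, which is precisely what you are trying to establish. The clean fixes: (a) sharpen to $|m_\mu-\mu|=O(\mu^2)$, which is what the quadratic structure of the error actually delivers (compare the analogous computation $\int_\Omega R = 16\pi\epsilon(\mu-1)m_T+O(\epsilon^2)$ in the proof of Theorem \ref{main}); or (b) argue by PMT rigidity that $m_\mu=0$ would force $\gamma_\mu$ to be flat, and then check that the construction cannot output the Euclidean metric when $\mu>0$ (e.g., the interpolating metric differs from flat by a definite amount in the annulus that the $O(\mu)$-small correction $h$ cannot cancel, or impose enough symmetry to pin this down). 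Either route works; as written, the positivity claim is the one genuine gap.
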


We also recall two results from \cite{cor:schw}, cf. \cite{cd, cs:ak}.  Let $K$ be the span of the set of linear and constant functions $K:=\mbox{span}\{ 1, x^1, \ldots, x^n\}$.

\begin{prop} Let $\Omega\subset \mathbb{R}^n$ be a bounded smooth domain, and let $\zeta\in C_c^{\infty}(\Omega)$ be a cutoff function.  There is a neighborhood $\mathcal U$ of the flat metric in $C^{4,\alpha}(\overline{\Omega})$ so that for any $\Omega_0\subset\subset \Omega$, there is an $\epsilon_0 >0$ so that for smooth metrics $g\in \mathcal U$ and functions $S\in C_c^{\infty}(\Omega_0)$ with $\|S\|_{C^{0,\alpha}}< \epsilon_0$, there is a smooth metric $g+h$ so that $h$ is supported in $\overline{\Omega}$ and satisfies a bound $\| h\|_{C^{2,\alpha}}\leq C \|S\|_{C^{0,\alpha}}$, and with 
\[R(g+h)-(R(g)+S) \in \zeta K.\] The map $(g,S)\mapsto h$ is continuous. \label{prop:proj} \end{prop}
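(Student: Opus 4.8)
The plan is to adapt the scalar curvature deformation argument of \cite{cor:schw}. The linearization of the scalar curvature map at a metric $g$, together with its formal $L^{2}$-adjoint, is
\[
DR_{g}\,h = -\Delta_{g}(\mathrm{tr}_{g}h) + \mathrm{div}_{g}\,\mathrm{div}_{g}\,h - \langle \mathrm{Ric}_{g}, h\rangle, \qquad DR_{g}^{*}u = \mathrm{Hess}_{g}\,u - (\Delta_{g}u)\,g - u\,\mathrm{Ric}_{g}.
\]
At $g=\delta$ one has $\mathrm{Ric}_{\delta}=0$, $\ker DR_{\delta}^{*}=K=\mathrm{span}\{1,x^{1},\dots,x^{n}\}$, and $DR_{\delta}\,DR_{\delta}^{*}=(n-1)\,\Delta^{2}$; in general $L_{g}:=DR_{g}\,DR_{g}^{*}$ is a fourth-order elliptic operator whose coefficients depend continuously on $g\in C^{4,\alpha}$ (which is why the hypothesis asks for a $C^{4,\alpha}$-neighborhood, and why losing two derivatives in passing from $u$ to $h=DR_{g}^{*}u$ turns a $C^{0,\alpha}$ source into a $C^{2,\alpha}$ perturbation). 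The underdetermined problem $R(g+h)=R(g)+S$ modulo $\zeta K$ will be solved with the ansatz $h=DR_{g}^{*}u$, which makes it a determined, formally self-adjoint fourth-order equation for the scalar $u$; the finite-dimensional error $\zeta K$ is forced on us because $L_{g}$, acting on functions that vanish to high order at $\partial\Omega$, has cokernel isomorphic to a copy of $K$ supported in $\Omega$.

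First I would fix the analytic setting on $\Omega$: weighted H\"older spaces whose weight degenerates at $\partial\Omega$, so that a solution $u$, and hence $h=DR_{g}^{*}u$, vanishes to high order along $\partial\Omega$. This is exactly what lets $h$ be extended by zero to a perturbation supported in $\overline{\Omega}$ with $g+h$ smooth across $\partial\Omega$. On these spaces $L_{\delta}$ is elliptic with trivial kernel (a biharmonic function vanishing to infinite order on a hypersurface vanishes identically) and hence, being Fredholm, with finite-dimensional cokernel; letting $\pi$ be the $L^{2}(\Omega)$-orthogonal projection onto $\zeta K$, one checks that $(\mathrm{Id}-\pi)\circ L_{\delta}$ is an isomorphism onto the $\zeta K$-complement of the target. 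By continuity of $g\mapsto L_{g}$, this persists --- with a uniformly bounded inverse --- for all $g$ in a sufficiently small $C^{4,\alpha}(\overline{\Omega})$-neighborhood $\mathcal U$ of $\delta$. I expect \emph{this} to be the main obstacle: proving weighted Schauder estimates for the fourth-order operator uniformly over $g\in\mathcal U$, pinning down the cokernel as precisely $\zeta K$, and verifying that any $S\in C_{c}^{\infty}(\Omega_{0})$ sits in the relevant weighted target space with norm bounded by a constant times $\|S\|_{C^{0,\alpha}}$ (here the condition $\Omega_{0}\subset\subset\Omega$ matters, since $S$ then vanishes near $\partial\Omega$).

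The nonlinear problem is then handled by a contraction mapping. Writing $R(g+DR_{g}^{*}u)=R(g)+L_{g}u+\mathcal Q_{g}(u)$ with $\mathcal Q_{g}$ the quadratic-and-higher remainder, one has estimates of the form $\|\mathcal Q_{g}(u)-\mathcal Q_{g}(\tilde u)\|\le C(\|u\|+\|\tilde u\|)\,\|u-\tilde u\|$ in the relevant norms. Hence, for $g\in\mathcal U$ and $\|S\|_{C^{0,\alpha}}<\epsilon_{0}$ small, the map $u\mapsto\big[(\mathrm{Id}-\pi)L_{g}\big]^{-1}\big((\mathrm{Id}-\pi)(S-\mathcal Q_{g}(u))\big)$ is a contraction on a small ball; its fixed point $u$ satisfies $\|u\|\le C\|S\|_{C^{0,\alpha}}$, so $h=DR_{g}^{*}u$ satisfies $\|h\|_{C^{2,\alpha}}\le C\|S\|_{C^{0,\alpha}}$, and, by construction, $(\mathrm{Id}-\pi)\big(R(g+h)-R(g)-S\big)=0$, i.e. $R(g+h)-R(g)-S\in\zeta K$. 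Elliptic regularity bootstraps $u$, hence $h$, to a smooth tensor on $\Omega$, and the boundary decay built into the weighted spaces makes $g+h$ smooth up to and across $\partial\Omega$. Finally, since the fixed point is produced by maps depending continuously on $(g,S)$ with a uniform contraction constant, the solution operator $(g,S)\mapsto h$ is continuous.
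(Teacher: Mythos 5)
Proposition~\ref{prop:proj} is not proved in this paper; it is recalled from \cite{cor:schw} (cf.\ \cite{cd, cs:ak}), and your sketch follows that argument faithfully: the ansatz $h = DR_g^*u$, reduction to the fourth-order operator $L_g = DR_g\circ DR_g^*$ on weighted H\"older spaces with boundary-degenerating weight, projection onto a complement of $\zeta K$, Picard iteration for the nonlinear problem, and elliptic regularity for smoothness. One point worth sharpening in your account of the linear step: it is not quite right to say that $L_g$ acting on the weighted spaces \emph{has} cokernel $\zeta K$ (or even $K$) for all $g\in\mathcal{U}$ --- indeed for generic $g$ near $\delta$ one expects $\ker DR_g^*=\{0\}$, so $L_g$ may well be surjective. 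The real issue is that the norm of $L_g^{-1}$ cannot be controlled uniformly as $g\to\delta$, because the kernel of the adjoint jumps from trivial to $K$ in the limit. Projecting away $\zeta K$ restores uniform invertibility: $(\mathrm{Id}-\pi)L_g$ is an isomorphism onto the projected target with a bound that depends continuously on $g$, and \emph{this} uniformity is what the contraction mapping and the estimate $\|h\|_{C^{2,\alpha}}\le C\|S\|_{C^{0,\alpha}}$, with $C$ independent of $g\in\mathcal{U}$, actually require.
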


We will see in the proof of Theorem \ref{main} in the next section how this proposition is a key ingredient in the proof of the following theorem. 

\begin{thm} Let $(M,g)$ be an $n$-manifold with zero scalar curvature containing an AE end $\mathcal E$ of ADM mass $m(g) \neq 0$, and let $\mathcal E_r\subset \mathcal E$ be the region corresponding to $\{ x: |x| > r\}$ in AE coordinates.  Let $k$ be a nonnegative integer.  Then for any $\epsilon>0$, there is an $R>0$ and a (smooth) metric $\bar{g}$ with zero scalar curvature and $\|g-\bar{g}\|_{C^k(\mathcal E)}<\epsilon$ (norm measured with respect to the Euclidean metric in the AE coordinate chart), with $|m(g)-m(\bar{g})|< \epsilon$, and so that $\bar{g}$ is equal to $g$ on $M\setminus \mathcal E_{R}$, and $\bar{g}$ is identical to an AE end of a standard Schwarzschild slice on $\mathcal E_{2R}$.  \label{thm:schw}
\end{thm}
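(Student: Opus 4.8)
The plan is to interpolate between the given metric $g$ and a carefully chosen Schwarzschild metric $\gamma_t$ from Proposition \ref{prop:as} in an annular region $\{R < |x| < 2R\}$ far out in the end $\mathcal E$, and then to correct the scalar curvature of the interpolated metric back to zero using Proposition \ref{prop:proj}. Here is how I would organize the argument.

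First I would set up the gluing annulus. Working in the AE coordinate chart, let $\mathcal A_R := \{x : R < |x| < 2R\}$; because $g$ is asymptotically Euclidean to order $\ell \geq 3$, the rescaled metric $g_R(x) := R^{-2} g(Rx)$ on the fixed annulus $\{1 < |x| < 2\}$ converges in $C^{4,\alpha}$ to the Euclidean metric as $R \to \infty$, with $\|g_R - \delta\|_{C^{4,\alpha}} = O(R^{-(n-2)})$. The key is that both $g$ and a Schwarzschild metric of the \emph{same} ADM mass differ from $\delta$ at the same rate $|x|^{-(n-2)}$, so in order to make the interpolation error small I must first replace $m(g)$ by an essentially correct Schwarzschild mass. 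Concretely, I choose $t$ (hence a Schwarzschild metric $\gamma_t$, rescaled to be exactly Schwarzschild of mass $m_t$ on $|x|\geq R$) so that $|m_t - m(g)|$ is small; then on $\mathcal A_R$ both $g - \delta$ and $\gamma_t - \delta$ are $O(R^{-(n-2)})$ with matching leading asymptotics up to the mass discrepancy. Using a cutoff $\chi$ depending only on $|x|/R$, equal to $1$ on $|x|\leq R$ and $0$ on $|x|\geq 2R$, I form
\[
g_\chi := \chi g + (1-\chi)\gamma_t.
\]
This agrees with $g$ on $M\setminus\mathcal E_R$ and with $\gamma_t$ (a Schwarzschild slice) on $\mathcal E_{2R}$, and $\|g_\chi - g\|_{C^k(\mathcal E)}$ can be made as small as desired by taking $R$ large and $t$ small — note the derivatives of $\chi$ cost powers of $R^{-1}$ which only help. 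The same estimate shows $g_\chi \in \mathcal U$ (the neighborhood of $\delta$ in Proposition \ref{prop:proj}) after rescaling to the unit annulus.

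Next I would estimate the scalar curvature of $g_\chi$. It vanishes identically on $|x|\leq R$ (where $g_\chi = g$, with $R(g)=0$) and on $|x|\geq 2R$ (where $g_\chi = \gamma_t$, with $R(\gamma_t)=0$), so $S := -R(g_\chi)$ is supported in the annulus $\mathcal A_R$. Since $R$ is a second-order operator quadratic in the metric perturbation, and $g - \delta$, $\gamma_t - \delta$, together with the cutoff derivatives, are all $O(R^{-(n-2)})$ in the relevant norms on $\mathcal A_R$ (in rescaled coordinates), one gets $\|S\|_{C^{0,\alpha}} = O(R^{-(n-2)})$ — in fact the cross terms and the cutoff-derivative terms are what dominate, and they are controlled by the discrepancy between $g$ and $\gamma_t$, which the choice of mass $m_t \approx m(g)$ makes small. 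Here the scaling must be handled with care: scalar curvature scales like $R^{-2}$, but after rescaling the annulus to unit size and applying Proposition \ref{prop:proj} there (then scaling back), the relevant smallness is the dimensionless quantity, which tends to $0$. So for $R$ large enough (and $t$ correspondingly small), $\|S\|_{C^{0,\alpha}}$ lies below the threshold $\epsilon_0$ of Proposition \ref{prop:proj}, applied on a fixed bounded domain containing the rescaled annulus with $\Omega_0$ containing the support of the rescaled $S$.

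Now I would apply Proposition \ref{prop:proj} to $g_\chi$ (rescaled) with this $S$: it produces $h$ supported in the closure of the annulus, with $\|h\|_{C^{2,\alpha}} \leq C\|S\|_{C^{0,\alpha}}$, such that $R(g_\chi + h) - (R(g_\chi) + S) = R(g_\chi + h) \in \zeta K$, i.e. $R(g_\chi+h)$ is a linear combination $\sum_{a} c_a \zeta x^a + c_0 \zeta$ of the functions spanning $\zeta K$. Rescaling back, set $\bar g := g_\chi + h$. Since $h$ is supported in $\overline{\mathcal A_R}$, $\bar g$ still equals $g$ on $M\setminus\mathcal E_R$ and equals $\gamma_t$ (a Schwarzschild slice) on $\mathcal E_{2R}$; and $\|\bar g - g\|_{C^k(\mathcal E)} \leq \|g_\chi - g\|_{C^k} + \|h\|_{C^k}$ is small. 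The remaining task — and the genuinely delicate point of the argument — is to kill the finite-dimensional obstruction $R(\bar g) \in \zeta K$. This is exactly where the positivity hypothesis $m(g)\neq 0$ enters: I would argue that the map sending a one-parameter family of candidate exterior Schwarzschild metrics (varying the mass parameter $m_t$ and a translation of the center, giving $n+1$ parameters matching $\dim K$) to the resulting coefficients $(c_0, c_1, \ldots, c_n)$ has invertible differential at the base point, precisely because the ADM mass is nonzero — so by the implicit function theorem one can tune these $n+1$ Schwarzschild parameters to force all $c_a = 0$, hence $R(\bar g) = 0$. (This is the standard Corvino–Schoen mechanism; the nondegeneracy of the pairing of $K$ against variations of the Schwarzschild data is what makes $m(g)\neq 0$ necessary.) Once $R(\bar g)=0$, the ADM mass of $\bar g$ equals $m_t$, which was chosen with $|m_t - m(g)| < \epsilon$, and $\bar g$ is identical to a standard Schwarzschild slice on $\mathcal E_{2R}$, completing the proof. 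The main obstacle, as indicated, is this last step: ensuring the scalar curvature correction can be made to vanish exactly rather than merely lie in $\zeta K$, which requires the quantitative non-degeneracy tied to $m(g)\neq 0$ and a fixed-point/implicit-function argument run uniformly as $R\to\infty$.
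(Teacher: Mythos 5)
Your proposal is correct and takes essentially the same approach as the paper's (cited) argument from \cite{cor:schw}: interpolate $g$ with an exterior Schwarzschild metric across a far-out annulus, project the scalar curvature into the cokernel $\zeta K$ via Proposition \ref{prop:proj}, and then tune the $n+1$ Schwarzschild parameters (mass and center) to annihilate the obstruction, with nondegeneracy supplied by $m(g)\neq 0$. The paper does not reprove this theorem but states it is a ``trivial modification'' of Corvino's original gluing theorem, which is precisely the argument you sketch.
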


\begin{remark} This statement is a slight generalization of the result in \cite{cor:schw}, in which $g$ had the following form on $\mathcal E$: $g_{ij}(x)=\big(1+\frac{m}{(n-1)|x|^{n-2}}\big)^{4/(n-2)} \delta_{ij}+h_{ij}(x)$, with $\left| \partial_x^{\beta} h_{ij}(x)\right|=O\big(|x|^{-|\beta|-(n-1)}\big)$ for $|\beta|\leq 3$.  Under this condition, or more generally assuming that $g$ has enough approximate parity symmetry so that the center of mass is well-defined \cite{cd, cs:ak}, then the center of the Schwarzschild end can be chosen close to that of $(\mathcal E, g)$.  The proof of Theorem \ref{thm:schw} is a trivial modification of the original argument.  \end{remark}

\section{Proof of the main theorem}

We now use the results assembled in the preceding section to prove the main theorem.  We will prove the result in dimension $n=3$; the argument in the other dimensions $n>3$ is essentially identical. 

\begin{proof}[Proof of Theorem \ref{main}]   Choose $c_1, \ldots, c_N\in \mathbb{R}^3$, with $\sum\limits_{k=1}^N m_k c_k =0$ and with $|c_k|>5$.  Moreover we arrange the closed balls $\{x: |x-c_k|\leq 4\}$ to be pairwise disjoint.  Let $B_0\subset \mathbb{R}^3$ be a Euclidean ball of radius $R_0=5+\max \{ |c_1|, \ldots, |c_N|\}$, centered at the origin. Let $m_T=\sum\limits_{k=1}^N m_k$.

We now construct a family of metrics $\tilde g_{(\epsilon,\mu,c)}$ on $B_0\setminus \Big( \bigcup\limits_{k=1}^N \{ x: |x-c_k|\leq 1\}\Big)$, for $0<\epsilon \ll 1$, $1/2 < \mu< 3/2$, and $c\in \mathbb{R}^3$ with $|c|<1$.  Let $\bar\gamma_{\epsilon}=\gamma_{t_{\epsilon}}$ be one of the metrics from the Proposition \ref{prop:as}, with mass $m(\bar{\gamma}_{\epsilon})=\epsilon m_{T}$, and let $\psi$ be a smooth nondecreasing function so that $\psi(t)= 0$ for $t< 9/4$ and $\psi(t)=1$ for $t>11/4$.  Define $\tilde g_{(\epsilon,\mu,c)}$ on $B_0\setminus \Big( \bigcup\limits_{k=1}^N \{ x: |x-c_k|\leq 1\}\Big)$ as follows:  

\begin{enumerate}
\item[$\bullet$] on each $A'_k=\{ x: 1<|x-c_k|<2\}$, $\tilde g_{(\epsilon,\mu,c)}(x)=\big(1+\frac{\epsilon m_k}{2|x-c_k|}\big)^4 g_{\text{Eucl}}(x)$
\item[$\bullet$] on each $A_k=\{ x: 2\leq |x-c_k|\leq 3\}$,\\ $\tilde{g}_{(\epsilon,\mu,c)}(x)=(1-\psi(|x-c_k|))\big(1+\frac{\epsilon m_k}{2|x-c_k|}\big)^4 g_{\text{Eucl}}(x)+\psi(|x-c_k|) \bar{\gamma}_{\epsilon}(\mu^{-1}(x-c))$
\item[$\bullet$] on $B_0\setminus \Big( \bigcup\limits_{k=1}^N \{x: |x-c_k|<3\}\Big)$, $\tilde g_{(\epsilon,\mu,c)}(x)= \bar{\gamma}_{\epsilon}(\mu^{-1}(x-c))$.
\end{enumerate}

Let $\Omega=B_0 \setminus \Big( \bigcup\limits_{k=1}^N \{ x: |x-c_k|\leq 2 \}\Big)$.  Let $\zeta$ be a smooth cutoff function, which we can define as follows: 

\begin{enumerate}
\item[$\bullet$]  $\zeta(x)=1$ on $ \{ x: |x|\leq R_0-1\} \setminus \Big( \bigcup\limits_{k=1}^N \{ x: |x-c_k|\leq 3 \}\Big)$\\
\item[$\bullet$]  $\zeta(x)= \psi(|x-c_k|)$ on $A_k=\{ x: 2\leq |x-c_k|\leq 3\}$\\
\item[$\bullet$] $\zeta(x)= \psi(R_0+2-|x|)$ on $\{ x: R_0-1\leq |x|\leq R_0\}$. 
\end{enumerate}

\medskip

There is an $\epsilon_1>0$ so that for all $(\epsilon, \mu, c)\in \Theta=\{ (\epsilon, \mu, c)\in \mathbb{R}\times\mathbb{R}\times \mathbb{R}^3: 0<\epsilon<\epsilon_1, \; 1/2< \mu<3/2,\; |c|<1\}$, $\tilde{g}_{(\epsilon, \mu,c)}$ is in the neighborhood $\mathcal U$ from Proposition \ref{prop:proj}, $\|\tilde{g}_{(\epsilon, \mu,c)}-g_{\text{Eucl}}\|_{C^5}\leq C \epsilon$, and moreover $\|R(\tilde{g}_{(\epsilon,\mu,c)})\|_{C^{0,\alpha}}<\epsilon_0$.  Thus there is a constant $C>0$ so that for any $\theta:=(\epsilon, \mu, c)\in \Theta$, there is a small deformation $h^{\theta}$ (obtained by applying Proposition \ref{prop:proj} with $g=\tilde{g}_{(\epsilon, \mu, c)}$ and $S=-R(\tilde{g}_{(\epsilon,\mu,c)})$) with the following properties: $h^{\theta}$ is smooth on $B_0$ and supported on $\overline{\Omega}$, $R(\tilde{g}_{(\epsilon,\mu,c)}+h^{\theta})= \sum\limits_{j=0}^3 a_j \zeta x^j$, where $x^0:=1$, and $\|h^{\theta}\|_{C^3(\Omega)}+\sum\limits_{j=0}^3 |a_j|\leq C\epsilon$.  Furthermore, given $\epsilon$, each $a_j$ is continuous in $(\mu, c)$.  We argue that for $\epsilon$ small enough, by choosing $\mu$ and $c$ appropriately, we can arrange for the integrals $\int_{\Omega} x^\ell R(\tilde{g}_{(\epsilon,\mu,c)}+h^{\theta})\; dx$ to vanish for
$\ell=0, 1, 2, 3$.  From this it follows immediately that $a_j=0$ for $j=0, 1, 2, 3$, and hence $R(\tilde{g}_{(\epsilon,\mu,c)}+h^{\theta})= 0$, as desired.  

 Let $\tilde{g}_{(\epsilon, \mu, c)}=\tilde{g}^{\theta}$ for notational convenience, and let $\tilde{h}^{\theta}=\tilde g^{\theta}-g_{\text{Eucl}}$.   Note that we can choose $C>0$ so that $\|\tilde{h}^{\theta}\|_{C^3(\Omega)}\leq C\epsilon$ for all $\theta \in \Theta$.  

Now let $L(h)=-\Delta(tr(h))+div(div(h))= \sum\limits_{i,j=1}^3 (h_{ij,ij}-h_{ii,jj})$ be the linearization at the Euclidean metric of the scalar curvature operator $g\mapsto R(g)$.  Note that $L^*(x^\ell)=0$ for $\ell =0, 1, 2, 3$.  We also note the coordinate formula for the scalar curvature 
\begin{eqnarray}
R(g)&=&\sum\limits_{i,j,k=1}^3g^{ij}\left( \Gamma^k_{ij,k}-\Gamma^k_{ik,j}+(\Gamma^k_{kl}\Gamma^l_{ij}-\Gamma^k_{jl}\Gamma^l_{ik}) \right)\nonumber\\
&=& \sum\limits_{i,j,k,m=1}^3g^{ij}g^{km}(g_{ik, jm}-g_{ij,km})+Q(\partial g) \label{scexp},
\end{eqnarray} where $Q(\partial g)$ is a quadratic function in the partials $\partial g$ contracted with the metric $g$.   We can apply (\ref{scexp}) to the metric $g=\tilde g^{\theta}+h^{\theta}=g_{\text{Eucl}}+\tilde{h}^{\theta}+h^{\theta}$ to obtain the following: there is a $C_1>0$ so that for small $\epsilon$, we have 
\begin{eqnarray} 
R(\tilde{g}^{\theta}+h^{\theta})&=& \sum\limits_{i,j=1}^3 \big(\tilde{g}^{\theta}_{ij,ij}-\tilde g^{\theta}_{ii,jj}\big)+ \sum\limits_{i,j=1}^3 \big(h^{\theta}_{ij,ij}-h^{\theta}_{ii,jj}\big)+B(\tilde h^{\theta}, h^{\theta})+Q(\tilde h^{\theta})\nonumber\\
&=& \sum\limits_{i,j=1}^3 \big(\tilde{g}^{\theta}_{ij,ij}-\tilde g^{\theta}_{ii,jj}\big)+ L(h^{\theta})+B(\tilde h^{\theta}, h^{\theta})+Q(\tilde h^{\theta}), \label{Scalin}
\end{eqnarray}
 where 
\begin{eqnarray*}
|B(\tilde h^{\theta}, h^{\theta})|&\leq &C_1 \|\tilde h^{\theta}\|_{C^2(\Omega)} \cdot \|h^{\theta}\|_{C^2(\Omega)}\leq C_1C^2 \epsilon^2,\\  
|Q(\tilde h^{\theta})|&\leq &C_1 \|\tilde h^{\theta}\|_{C^2(\Omega)}^2\leq C_1 C^2 \epsilon^2.
\end{eqnarray*}

We also note that for a Schwarzschild metric $g^S$ of mass $m$ and center $c$, i.e. $g^S(x)=\Big(1+\frac{m}{2|x-c|}\Big)^4 g_{\text{Eucl}}$, the leading order term in the expansion $\sum\limits_{i,j=1}^3\left( g^S_{ij,i}-g^S_{ii,j}\right)\nu_e^j$ (where $\nu_e=\frac{x}{|x|}$ is the Euclidean unit normal to the sphere $|x|=r$) is just $$-4 \left(1+\frac{m}{2|x|}\right)^3\left(\frac{m}{2|x|^2}\right)\sum\limits_{i,j=1}^3\left( \frac{x^i}{|x|}\delta_{ij}-\frac{x^j}{|x|}\right)\frac{x^j}{|x|}.$$  Let $d\mu_e$ be the Euclidean surface measure.  Then there are functions $F_\ell(m, c, r)$, with $|F_\ell(m, c, r)|\leq C_2 m^2/r$, where $C_2$ is independent of $|c|\leq r/2$,  so that 
\begin{eqnarray}
\label{eq:schmass}
 \int\limits_{|x|=r} \sum\limits_{i,j=1}^3\left( g^S_{ij,i}-g^S_{ii,j}\right) \nu_e^j\; d\mu_e =16 \pi m +F_0(m,c,r),
\end{eqnarray}
and for $\ell=1, 2, 3$, 
\begin{eqnarray}
 \int\limits_{|x|=r}\left[ 
\sum\limits_{i,j=1}^3x^\ell \left( g^S_{ij,i}-g^S_{ii,j}\right) \nu_e^j d\mu_e - 
\sum\limits_{i=1}^3\big(g^S_{i\ell} \nu_e^i- g^S_{ii}\nu_e^\ell \big)  d\mu_e\right] = 16\pi mc^\ell+F_\ell(m,c,r).\nonumber\\
\label{eq:schctr}
\end{eqnarray} 

Using integration by parts along with (\ref{Scalin}) and (\ref{eq:schmass}), and the fact that $h^{\theta}$ and its partial derivatives vanish at $\partial \Omega$, we obtain
\begin{eqnarray*}
\int_{\Omega}R(\tilde{g}^{\theta}+h^{\theta})\; dx =16 \pi \epsilon (\mu-1)m_T + O(\epsilon^2).
\end{eqnarray*}

Let $\Sigma_k=\{ x: |x-c_k|= 2\}$ and $\Sigma_0=\partial B_0=\partial\{ |x|\leq R_0\}$, and let $c_0=0$.  We now use (\ref{Scalin}), integration by parts along with $L^*(x^\ell)=0$, as well as (\ref{eq:schmass}) and (\ref{eq:schctr}), to obtain for $\ell=1, 2, 3$, 
\begin{eqnarray*}
\int_{\Omega} x^\ell R(\tilde{g}^{\theta}+h^{\theta})\; dx &=& \int_{\Omega} x^\ell \sum\limits_{i,j=1}^3 \big(\tilde{g}^{\theta}_{ij,ij}-\tilde g^{\theta}_{ii,jj}\big)\; dx +O(\epsilon^2) \\
&=&\sum\limits_{k=0}^N \int\limits_{\Sigma_k} \Big(x^\ell \sum\limits_{i,j=1}^3 \big(\tilde{g}^{\theta}_{ij,i}-\tilde g^{\theta}_{ii,j}\big) \nu_e^j  - \sum\limits_{i=1}^3( \tilde{g}^{\theta}_{i\ell}\nu_e^i-\tilde{g}^{\theta}_{ii}\nu_e^\ell) \Big)d\mu_e +O(\epsilon^2)\\
&=& \sum\limits_{k=0}^N \int\limits_{\Sigma_k} \Big((x^\ell-c_k^\ell) \sum\limits_{i,j=1}^3 \big(\tilde{g}^{\theta}_{ij,i}-\tilde g^{\theta}_{ii,j}\big) \nu_e^j  - \sum\limits_{i=1}^3( \tilde{g}^{\theta}_{i\ell}\nu_e^i-\tilde{g}^{\theta}_{ii}\nu_e^\ell) \Big)d\mu_e\\
& & +  16\pi\sum\limits_{k=1}^N \epsilon m_k c_k^\ell +O(\epsilon^2)\\
&=&\int\limits_{\Sigma_0} \Big(x^\ell \sum\limits_{i,j=1}^3 \big(\tilde{g}^{\theta}_{ij,i}-\tilde g^{\theta}_{ii,j}\big) \nu_e^j  - \sum\limits_{i=1}^3( \tilde{g}^{\theta}_{i\ell}\nu_e^i-\tilde{g}^{\theta}_{ii}\nu_e^\ell) \Big)d\mu_e +O(\epsilon^2)\\
&=& 16\pi \epsilon \mu m_T c^\ell + O(\epsilon^2),
\end{eqnarray*}
where we have used the condition $\sum\limits_{k=1}^N m_k c_k=0$.  

Thus we see by continuity that for $\epsilon$ small enough, for some $\mu$ and $c$, with $(\mu-1)=O(\epsilon)$ and $c= O(\epsilon)$, that we can arrange the above integrals to vanish.  This means for $\theta=(\epsilon, \mu, c)$, the resulting metric $\bar{g}_{\epsilon}:=\tilde g^{\theta}+h^{\theta}$ is scalar-flat and is exactly Schwarzschild (with the respective masses) near $\partial \Omega$, so we extend it to $\mathbb{R}^3\setminus B_0$ using the exterior Schwarzschild metric.  

To finish the proof of the theorem, we note that by applying Theorem \ref{thm:schw}, we can assume without loss of generality that for each $k$, there is an $R_k>1$ and there is a coordinate system $x$ for $E_k$, so that on $\{ |x|\geq R_k\}$, $g^k$ has the Schwarzschild form  $g^k(x)= \big( 1+\frac{m_k}{2|x|}\big)^4 \delta_{ij}$.  Indeed, in applying the construction in \cite{cor:schw} to the given metrics in Theorem \ref{main}, the original masses can be made to change by an arbitrarily small amount (in our case by less than $\frac{\epsilon}{2N}$, say), by taking $R_k=2R$ sufficiently large; we can assume as well that $|x|<R$ on each $U_k$.   We assume this has been done and the masses have been re-labelled to $m_1, \ldots, m_N$.  

Let $\Omega_{\alpha}=\mathbb{R}^3\setminus \Big( \bigcup\limits_{k=1}^N \{ x: |x-c_k/\alpha|\leq \frac{1}{\alpha}\}\Big)$.  We can now scale the metric $\bar g_{\epsilon}$ by pulling back a scaling of coordinates $f_{\epsilon}: \Omega_{\epsilon} \rightarrow \Omega_1$, $f_{\epsilon}(x)=\epsilon x$, for $\epsilon<\frac{1}{2R}$.  In the resulting metric $g_{\epsilon}=\frac{1}{\epsilon^2} f_{\epsilon}^*\bar g_{\epsilon}$, the masses of the Schwarzschild neighborhoods of the boundaries $|x-c_k/\epsilon|=\frac{1}{\epsilon}>2R$ are $m_1, m_2, \ldots, m_N$, and near the outer boundary is $\mu m_T$.  Given $\epsilon_0>0$, we can choose $\epsilon$ small enough so that $\big| m(g_{\epsilon})- m_T\big| = \big| (\mu-1) m_T\big|< \epsilon_0$.  Note that the resulting configuration of centers is the original configuration $c_1,\ldots, c_N$ scaled by $\frac{1}{\epsilon}$.  Finally we can glue in the metrics $g^k$ on $\big( E_k\cap \{x:|x|\leq \frac{2}{\epsilon}\} \big)\supset U_k$ isometrically, by identifying $\{ x: \frac{1}{\epsilon}\leq|x-c_k/\epsilon|\leq \frac{2}{\epsilon}\}\subset \Omega_{\epsilon}$ with $E_k\cap\{ x: \frac{1}{\epsilon}\leq |x| \leq \frac{2}{\epsilon}\}$.  \end{proof}

\begin{remark}  In the above construction of $\tilde{g}^{\theta}$ and $\bar{g}_{\epsilon}=\tilde g^{\theta}+h^{\theta}$, we could allow one or more of the $m_k$ to be non-positive.  The proof above uses the positive mass metrics from Proposition \ref{prop:as}, which were used to interpolate the various Schwarzschild metrics about each $c_k$, so this requires $m_T>0$.  One could instead interpolate using the Euclidean metric if one also adds another gluing annulus near the boundary $\partial B_0$ to patch the Euclidean metric to an exterior Schwarzschild, and the proof follows \emph{mutatis mutandis}.  Using this variant of the construction, we could in principle allow cases with $m_T< 0$. \end{remark}

\section{Apparent Horizons}  A natural question to ask is whether the $N$-body configurations we produce contain apparent horizons (minimal spheres), apart from any minimal spheres already present in the bodies.  We want to know whether we introduce any new horizons in the construction, which would shield one or more of the $N$ bodies from the others; we also note that the construction in Theorem \ref{thm:schw} is a small perturbation near infinity in an AE end, and so it does not introduce any new minimal surfaces into the geometry.  In this section we show that for $n=3$ and for small enough $\epsilon$, no minimal surfaces intersect $\Omega_{\epsilon}$.  This can be proved by various means (cf. \cite{cm}), and one such way is to use the following proposition from \cite{cor:mh}, which relies on the Riemannian Penrose Inequality \cite{br:pen, hi:pen}, cf. \cite{BrayLee}.  For completeness we record the proof here. 

\begin{prop}
Let $(M,g)$ be a complete AE three-manifold with nonnegative scalar curvature.  Suppose that the sectional curvatures are bounded above by $C>0$.  Then if the ADM mass $m$ satisfies $m\sqrt{C}<\frac{1}{2}$, there are no closed minimal surfaces in $M$. \label{masstop}
\end{prop}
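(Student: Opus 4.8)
The plan is to argue by contradiction using the Riemannian Penrose Inequality. Suppose $\Sigma \subset M$ is a closed minimal surface; by standard arguments we may assume $\Sigma$ is the outermost minimal surface bounding a region containing the core of $M$, so that $\Sigma$ has no minimal surfaces outside it, is outer-minimizing, and (taking the component that encloses the most area) may be assumed connected, or at least that its outermost component $\Sigma_0$ is the relevant object. The Riemannian Penrose Inequality then gives $m \geq \sqrt{|\Sigma_0|/(16\pi)}$, equivalently $|\Sigma_0| \leq 16\pi m^2$.

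Next I would extract a lower bound on the area of a closed minimal surface from the curvature hypothesis. Since $\Sigma_0$ is minimal, the Gauss equation gives, for the intrinsic Gauss curvature $K_{\Sigma_0}$ of $\Sigma_0$ at a point, $K_{\Sigma_0} = \mathrm{sec}_g(T_p\Sigma_0) + \det(\mathrm{II}) = \mathrm{sec}_g(T_p\Sigma_0) - \tfrac14|\mathrm{II}|^2 \leq \mathrm{sec}_g(T_p\Sigma_0) \leq C$, using minimality ($\mathrm{tr}\,\mathrm{II}=0$ implies $\det\mathrm{II} \leq 0$) and the sectional curvature bound. Integrating over $\Sigma_0$ and applying Gauss--Bonnet, $2\pi\chi(\Sigma_0) = \int_{\Sigma_0} K_{\Sigma_0}\, dA \leq C\, |\Sigma_0|$. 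A minimal surface in an AE manifold of nonnegative scalar curvature, being the outermost one, is a sphere (by the Hawking-type / stability argument: a stable minimal surface in a $3$-manifold with $R \geq 0$ has $\int(K_{\Sigma_0} - \tfrac12 R) \geq 0$ forcing $\chi(\Sigma_0) > 0$, hence $\chi(\Sigma_0) = 2$), so $\chi(\Sigma_0) = 2$ and we get $|\Sigma_0| \geq 4\pi/C$.

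Combining the two inequalities yields $4\pi/C \leq |\Sigma_0| \leq 16\pi m^2$, i.e. $m^2 C \geq 1/4$, so $m\sqrt{C} \geq 1/2$, contradicting the hypothesis $m\sqrt{C} < 1/2$. Hence no closed minimal surface exists.

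The main obstacle is the reduction to an \emph{outermost} minimal surface to which the Penrose inequality applies, and the justification that it is a topological sphere: one must invoke the existence of the outermost minimal surface (a horizon) whenever any closed minimal surface is present, the fact that it is outer-area-minimizing (the hypothesis of the Penrose inequality), and the topological conclusion $\Sigma_0 \cong S^2$. These are standard (cf. the references \cite{br:pen, hi:pen, cm}), but they are where the real content sits; once granted, the Gauss--Bonnet estimate and the arithmetic are routine. I would also remark that if one only wants to rule out \emph{stable} minimal surfaces the sphere conclusion is immediate from the second variation, and that the scalar curvature need only be nonnegative near infinity for the Penrose inequality to be applied after a standard conformal or exhaustion argument, though we will not need that generality here.
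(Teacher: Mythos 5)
Your proof is correct and follows essentially the same route as the paper: existence of an outermost minimal sphere, the Gauss equation with $\det\mathrm{II}\leq 0$ from minimality to bound the intrinsic curvature by $C$, Gauss--Bonnet to get $A(\Sigma)\geq 4\pi/C$, and the Penrose inequality to conclude. (Minor slip: for a trace-free symmetric $2\times2$ form one has $\det\mathrm{II}=-\tfrac12|\mathrm{II}|^2$, not $-\tfrac14|\mathrm{II}|^2$, but only the sign is used so the argument is unaffected.)
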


\begin{proof} If there were a minimal surface in $M$, then by the analysis in \cite{hi:pen} (which uses the convex barrier spheres near infinity in an end), there would exist an outermost minimal sphere $\Sigma$.  Let $A(\Sigma)$ denote its area.

We note that the Gauss equation implies an upper bound on the curvature of the horizon $\Sigma$ as follows.  Let $p$ be a point in $\Sigma$, and let $\{e_{1}, e_{2}\}$ be a basis of $T_{p}(\Sigma)$ in which the second fundamental form II is diagonal.  Let $\kappa_{1}=$II$(e_{1},e_{1}), \kappa_{2}=$II$(e_{2},e_{2})$ be the principal curvatures of $\Sigma$ at $p$.  Since $\Sigma$ is minimal, we have $\kappa_{1}+\kappa_{2}=0$, and so $\kappa_{1}\kappa_{2}\leq 0$.  The Gauss equation then yields
\[ K^{\Sigma} \equiv R(e_{1},e_{2},e_{1},e_{2}) = \overline{R}(e_{1},e_{2},e_{1},e_{2}) +  \kappa_{1}\kappa_{2} \leq C \]
\noindent where $R$ denotes the curvature tensor of $\Sigma$, and $\overline{R}$ the curvature tensor of $M$.  

We now invoke the Gauss-Bonnet theorem, which together with the preceding inequality yields 
\[ 4\pi = \int_{\Sigma} K^{\Sigma} \, d\mu_{\Sigma} \leq C\; A(\Sigma). \]
Combining this with the Penrose Inequality $m \geq \sqrt{\frac{A(\Sigma)}{16\pi}}$, we obtain $$m\sqrt{C}\geq \frac{1}{2}.$$  \end{proof} 

We now consider complete AE solutions $(M_k, g^k)$, $n\geq 3$, $k=1, \ldots, N$, of the time-symmetric constraint $R(g^k)=0$, in each of which we have chosen an AE end $E_k$ and compactly contained sub-domains $U_k\subset E_k$.  We apply the construction in the proof of Theorem \ref{main} to obtain a family of AE metrics $(M,g_{\epsilon})$; we can write, using above notation, $M= \Omega'_{\epsilon} \cup \overline{V}_{\epsilon}$, where 
\begin{eqnarray*}
\overline{V}_{\epsilon}&=& \bigcup \limits_{k=1}^N \Big( (M_k\setminus E_k)\cup \{x\in E_k: |x|\leq R\} \cup T_k \Big)\; ,\\
\Omega'_{\epsilon}&=&\mathbb{R}^3\setminus \Big( \bigcup\limits_{k=1}^N \{ x: |x-c_k/\epsilon|\leq \frac{2}{\epsilon}\}\Big)=M_{\text{ext}}\setminus \overline V_{\epsilon}\; .
\end{eqnarray*}  
Recall $U_k\subset \{x\in E_k: |x|\leq R\}$, and we have let $T_k$ be the \emph{transition region}  $\{x\in E_k: R\leq |x|\leq \frac{2}{\epsilon})$, comprised of the annulus where the original metric transitions to the appropriate Schwarzschild metric (as in Theorem \ref{thm:schw}), together with another annulus on which the metric is Schwarzschild.  We note that $g_{\epsilon}$ converges as $\epsilon \rightarrow 0^+$ to the Euclidean metric on $\Omega_{\epsilon}$, and each transition region is foliated by convex spheres. 

\begin{prop} For sufficiently small $\epsilon$, any closed minimal surface in $(M,g_{\epsilon})$ must be contained in $V_{\epsilon}\setminus \bigcup\limits_{k=1}^N T_k$.  \end{prop}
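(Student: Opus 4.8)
The plan is a maximum-principle argument using the convex barrier spheres present in $\widehat\Omega_\epsilon:=\Omega'_\epsilon\cup\bigcup_k T_k$, i.e.\ the complement of $V_\epsilon\setminus\bigcup_k T_k=\bigcup_k B^k$, where $B^k:=(M_k\setminus E_k)\cup\{x\in E_k:|x|\le R\}$ is the $k$-th body. Invoking Theorem \ref{thm:schw} we may assume $g^k$ is exactly Schwarzschild of mass $m_k$ on $\{|x|\ge R\}\subset E_k$ with $R$ as large as we please, so that $R>\tfrac12\max_k m_k$ and, for $\epsilon$ small, the exterior Schwarzschild end of $(M,g_\epsilon)$ (mass $\mu m_T$) is foliated by strictly mean-convex coordinate spheres $\{|x|=s\}$. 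First I would note that a closed minimal surface $\Sigma$ in $(M,g_\epsilon)$, being compact, cannot reach that exactly-Schwarzschild region: its point of largest radius there would be an interior tangency from the convex side of such a sphere, contradicting the Hopf maximum principle; hence $\Sigma$ lies well inside the AE end.

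Next, on each $T_k$ the metric is exactly Schwarzschild of mass $m_k$, hence $T_k$ is foliated by the strictly mean-convex spheres $\{|x|=s\}_{E_k}$ ($R<s<2/\epsilon$), with mean-curvature vector pointing toward decreasing $s$. By the standard tangency-principle argument along a coherently oriented mean-convex foliation, $|x|$ has no interior local maximum on $\Sigma$ inside $T_k$, and $\Sigma$ cannot be locally tangent from the $T_k$-side to one of these spheres; so if $\Sigma$ meets $\mathrm{int}\,T_k$ it crosses $\{|x|=2/\epsilon\}_{E_k}$ and meets $\Omega'_\epsilon$. The same argument shows $\Sigma$ can touch $\partial B^k=\{|x|=R\}_{E_k}$ only from the $B^k$-side, which is likewise excluded. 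Therefore $\Sigma\cap\widehat\Omega_\epsilon\neq\emptyset$ forces $\Sigma\cap\Omega'_\epsilon\neq\emptyset$, and it suffices to exclude the latter for small $\epsilon$.

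Suppose, for contradiction, that there are $\epsilon_j\to0$ and closed minimal surfaces meeting $\Omega'_{\epsilon_j}$. Since any closed minimal surface yields an outermost one (Hawking--Ilmanen, as in the proof of Proposition \ref{masstop}) enclosing it, and an outermost minimal surface cannot sit inside a single body $B^k$ (it would then enclose a minimal surface contained in $B^k$, whereas ours meets $\widehat\Omega_\epsilon$), we may take $\Sigma_j$ to be the outermost surface, which is \emph{stable}. Stability yields interior curvature estimates wherever the ambient geometry is controlled (cf.\ \cite{cm}), hence uniform area bounds for the part of $\Sigma_j$ in any fixed bounded set. Rescaling $\Omega'_{\epsilon_j}$ by $f_{\epsilon_j}$, so that $g_{\epsilon_j}$ there becomes $\bar g_{\epsilon_j}$, which converges smoothly on compact subsets to $g_{\mathrm{Eucl}}$ on $\{|x|\le R_0\}$ while the removed balls shrink to the centres $c_k$, I would extract a subsequential limit. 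If $\Sigma_j$ does not concentrate near any $c_k$, the limit is a nonempty finite-area minimal surface in a bounded domain of flat $\mathbb{R}^3$ with at worst removable point singularities at the $c_k$, i.e.\ a closed minimal surface of $\mathbb{R}^3$ --- impossible; if it concentrates near some $c_k$, un-rescaling about $c_k$ at the concentration scale produces either a closed minimal surface of the fixed manifold $(M_k,g^k)$ meeting the transition-and-Schwarzschild part $\{|x|>R\}$ of its end (excluded by the radial barrier there) or a closed minimal surface of flat $\mathbb{R}^3$ (again impossible). The main obstacle is exactly this last step: on $\Omega'_\epsilon$ the metric is only asymptotically Euclidean, not Euclidean, so its coordinate spheres of moderate radius are not literally convex and the flat case cannot simply be quoted; the real work is the blow-up/compactness analysis, with stability of the outermost surface supplying the curvature and area control needed to pass to the limit, and with the two escape scenarios handled separately.
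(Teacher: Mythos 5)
Your maximum-principle preamble is essentially sound and corresponds to the last paragraph of the paper's proof (the paper asserts that the transition regions $T_k$ are foliated by convex spheres, and that $\partial\Omega'_\epsilon\cap\overline{V}_\epsilon$ is a union of convex spheres). One small inaccuracy: $T_k$ is \emph{not} exactly Schwarzschild throughout --- it contains the gluing annulus from Theorem~\ref{thm:schw} where the metric merely approximates the original AE end; the mean-convexity of the coordinate spheres there follows because the perturbation in Theorem~\ref{thm:schw} is small relative to the $\sim 2/r$ mean curvature of coordinate spheres, not because the metric is literally Schwarzschild.

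The heart of the matter, excluding an outermost horizon meeting $\Omega'_\epsilon$, is where you diverge from the paper, and where your argument has a genuine gap. You write that ``stability yields interior curvature estimates wherever the ambient geometry is controlled, hence uniform area bounds for the part of $\Sigma_j$ in any fixed bounded set.'' The implication is false: curvature bounds do not control area (a stack of arbitrarily many nearly-parallel minimal planes in a unit ball of $\mathbb{R}^3$ has zero curvature and unbounded area). Without an a priori area bound you cannot extract a smooth limit surface; at best you obtain a minimal lamination, and the removable-singularity step at the concentration points $c_k$ also needs local area control. The paper supplies exactly the missing ingredient from the Riemannian Penrose inequality: since the outermost horizon is a sphere enclosed in an end of ADM mass $O(\epsilon)$ (after the natural rescaling), Penrose gives $A(\Sigma_\epsilon)\leq C\epsilon^2$. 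The paper then avoids any blow-up/compactness analysis entirely: it combines this $O(\epsilon^2)$ area \emph{upper} bound with Schoen's curvature estimate at a point $p\in\Sigma_\epsilon\cap(M\setminus V_\epsilon)$ --- which forces $\Sigma_\epsilon$ to contain a graphical piece of definite size, hence a uniform area \emph{lower} bound --- and the two bounds contradict for small $\epsilon$. The paper also reuses Proposition~\ref{masstop} (Penrose plus Gauss--Bonnet) to rule out the case that $\Sigma_\epsilon$ lies entirely in the nearly-flat exterior $\Omega_\epsilon$, which lets it conclude $\Sigma_\epsilon$ meets both $V'_\epsilon$ and $M\setminus V_\epsilon$. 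Your blow-up strategy could likely be made to work if you first imported the Penrose area bound and then carefully handled the two escape scenarios, but as written it rests on an unjustified area estimate, and in any case it is considerably heavier machinery than the short upper-vs-lower area contradiction the paper runs.
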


\begin{proof}   If there were a minimal surface whose intersection with $M\setminus V_{\epsilon}$ were nonempty, then applying arguments in \cite{hi:pen}, there is an outermost minimal sphere $\Sigma_{\epsilon}$ with respect to the end contained in $M_{\text{ext}}$ into which we have glued $V_{\epsilon}$, with $\Sigma\cap (M\setminus V_{\epsilon})$ nonempty.  Let $V_{\epsilon}'=V_{\epsilon} \setminus \left( \bigcup\limits_{k=1}^N \{ x\in E_k: \frac{1}{\epsilon}\leq |x-c_k|\leq \frac{2}{\epsilon}\}\right)$.  We re-scale the metric by a constant so that on $\Omega_{\epsilon}$ the metric is $f_{\epsilon}^*\bar{g}_{\epsilon}$, which is uniformly close to the Euclidean metric.  Now it follows by applying the argument in Proposition \ref{masstop} that for small $\epsilon$ there cannot be an outermost horizon contained in $\Omega_{\epsilon}$.   Thus $\Sigma_{\epsilon}$ will clearly have to intersect both $V_{\epsilon}'$ and $M\setminus V_{\epsilon}$.   By the Penrose inequality, and the estimate of $\mu$, there is a constant $C_2$ so that $A(\Sigma_{\epsilon})\leq C_2\epsilon^2$.  

On the other hand, let $p\in \Sigma_{\epsilon}\cap (M\setminus V_{\epsilon})$.  There is an $r_0\in (0,1)$ and for each $\epsilon$ an $r_{\epsilon}\in (r_0, 1)$ so that if $B_{r}^M(p)$ is the geodesic ball of radius $r$ about $p$, then $S=\Sigma_{\epsilon}\cap B_{r_{\epsilon}}^M(p)$ is a smooth surface with nonempty boundary (since $\Sigma_{\epsilon}\cap V_{\epsilon}'$ is nonempty).  We can now apply Schoen's curvature estimates for the stable hypersurface $S$ \cite{s:ems}: there is a $C$ so that for all small $\epsilon$, the second fundamental form on $\Sigma_{\epsilon}\cap B_{r_0/2}^M(p)$ has norm bounded by $C$.  Hence near $p$, $\Sigma_{\epsilon}\cap B_{r_0/2}^M(p)$ is graphical with bounded gradient and curvature over a ball of fixed size in $T_p \Sigma_{\epsilon}$.  The area of $\Sigma_{\epsilon}$ is thus uniformly bounded from below, either by the graphical description or volume comparison (since the intrinsic ball $B^{\Sigma_{\epsilon}}_{r_0/2}(p) \subset \Sigma_{\epsilon}\cap B_{r_0/2}^M(p)$).  This contradicts the area estimate from the Penrose Inequality. 

Thus $\Sigma_{\epsilon} \subset \overline{V}_{\epsilon}$.  But then by the maximum principle, it follows that $\Sigma_{\epsilon}\subset V_{\epsilon}$, since $\partial \Omega'_{\epsilon} \cap \overline{V}_{\epsilon}$ is a union of $N$ convex spheres.  Similarly, $\Sigma_{\epsilon}$ cannot intersect any of the transition regions. \end{proof}

\end{document}